\documentclass[11pt]{article}
\usepackage{amssymb}
\usepackage[perpage,symbol]{footmisc}
\usepackage{listings}
\usepackage{amsfonts}
\usepackage{enumerate}
\usepackage{amsmath}
\usepackage{cite}
\usepackage{array}
\usepackage{booktabs}

\topmargin  = -0.2 in\oddsidemargin = 0.25 in
\setlength{\textheight}{8.5in} \setlength{\textwidth}{6in}
\setlength{\unitlength}{1.0 mm}

\begin{document}

\newtheorem{theorem}{Theorem}[section]
\newtheorem{corollary}[theorem]{Corollary}
\newtheorem{definition}[theorem]{Definition}
\newtheorem{proposition}[theorem]{Proposition}
\newtheorem{lemma}[theorem]{Lemma}
\newtheorem{example}[theorem]{Example}
\newtheorem{algorithm}[theorem]{Algorithm}
\newenvironment{proof}{\noindent {\bf Proof.}}{\rule{3mm}{3mm}\par\medskip}
\newcommand{\remark}{\medskip\par\noindent {\bf Remark.~~}}
\title{An NP-hard generalization of Nim}
\author{Chunlei Liu\footnote{Shanghai Dengbi Communication Tech. Co., Shanghai 200240, clliu@sjtu.edu.cn.}}
\date{}
\maketitle
\thispagestyle{empty}

\abstract{A new combinatorial game is given. It generalizes both Substraction and Nim. It is proved the computation of the Nash equilibrium points in these new games is NP-hard.}

\noindent {\bf Key words}:  combinatorial game,  Nash equilibrium point, complexity, NP-complete, NP-hard.

\section{\small{Introduction}}
\hskip .2in
In this section we introduce the notion of occupation games.
\begin{definition}Let $X$ be a  finite set, $S$ a set of subsets of $X$,
$O$ a set of nonempty subsets of $X$, and
$A\in S$. The $S$-admissible restriction of $O$ on $A$ is the set
  $$O\mid_{A,S}=\{\sigma\in O\mid \sigma\subseteq A,A-\sigma\in S \}.$$\end{definition}
\begin{definition}Let $X$ be a  finite set, $S$ a set of subsets of $X$, and  $O$ a set of nonempty subsets of $S$. Then the triple $(X,S,O)$, together with the system of move laws:
$$A\stackrel{\sigma}{\rightarrow}A-\sigma,\ \forall A\in S,\ \forall \sigma\in O\mid_{A,S},$$
is called an occupation game.
Starting from $A_0\in S$, two players take turns to
get a sequence of moves
$$A_0\stackrel{\sigma_0}{\rightarrow}
A\stackrel{\sigma_1}{\rightarrow}
\cdots\stackrel{\sigma_{n-1}}{\rightarrow}A_n$$
until $O\mid_{A_n,S}=\emptyset$. The player whose turn is to move $A_n$ is the loser.
\end{definition}
\begin{example} Let  $X=\sqcup_{i=1}^nX_i$ be a disjoint unions of finite sets, $S$ the set of subsets of $X$, and  $O=\sqcup_{i=1}^nS_i$, where $S_i$ is the set of 1-subsets and 2-subsets of
$X_i$.
Then the occupation game $(X,S,O)$ is Substraction with $n$-piles.
\end{example}
\begin{example}[\cite{Bou02},\cite{Gru39}] Let  $X=\sqcup_{i=1}^nX_i$ be a disjoint unions of finite sets, $S$ the set of subsets of $X$, and
 and  $O=\sqcup_{i=1}^nS_i$, where $S_i$ is the set of nonempty subsets   of
$X_i$.
Then $(X,S,O)$ is  Nim with $n$ piles.
\end{example}
From the second example, an occupation game is a generalization of Nim. For other interesting generalizations of Nim, please consult \cite{BCG1} and \cite{BCG2}.
Let $(X,S,O)$ be an occupation  game, and $A\in S$.
Then the Nash equilibrium point of  $A$ is $0$ or $1$. So we denote it by  ${\rm Truth}(A)$.
By \cite{Na50}, if ${\rm Truth}(A)=1$, then the first player of $A$ has a strategy to win. Similarly, if ${\rm Truth}(A)=0$, then the second player of $A$ has a strategy to win. It follows that
\begin{itemize}
  \item ${\rm Truth}(A)=0$ if  $A=\emptyset$,
  \item ${\rm Truth}(A)=1$ if ${\rm Truth}(A-\sigma)=0$ for some  $\sigma\in O\mid_{A,S}$, and
  \item ${\rm Truth}(A)=0$ if ${\rm Truth}(A-\sigma)=1$ for all $\sigma\in O\mid_{A,S}$.
\end{itemize}
Given an occupation  game  $(X,S,O)$, and $A\in S$,
the computation of ${\rm Truth}(A)$ is  interesting.
\begin{example}Let $(X,S,O)$ be $n$-piles Substraction  with $X=\sqcup_{i=1}^nX_i$. For each $i$, let  $A_i\subseteq X_i$, $a_i=0,1,2$ be the remainder of $|A_i|$ modulo $3$,
  and 
  $a_i=a_{i0}+a_{i1}2$ with $a_{i0},a_{i1}=0,1$.
Then
$${\rm Truth}(\cup_{i=1}^nA_i)\oplus1=\prod_{j=0}^1(1\oplus\oplus_{i=1}^na_{ij}).$$
\end{example}
\begin{example}[\cite{Bou02}]Let $(X,S,O)$ be $n$-piles Nim with  $X=\sqcup_{i=1}^nX_i$.
For each $i$, let   $A_i\subseteq X_i$, $|A_i|=\sum_{j=0}^ka_{ij}2^j$  with $a_{ij}={0,1}$.
Then $${\rm Truth}(\cup_{i=1}^nA_i)\oplus1=\prod_{j=0}^k(1\oplus\oplus_{i=1}^na_{ij}).$$
\end{example}

In general, the computation of ${\rm Truth}(A)$ in an occupation game is not easy. So, according to from \cite{Co71}, and \cite{Ka72}, it is interesting to know the complexity of the computation of ${\rm Truth}(A)$
we prove the following theorem.
\begin{theorem}\label{hard}
Let $(X,S,O)$ be an occupation  game and $A\in S$. Then the computation of ${\rm Truth}(A)$  is ${\rm NP}$-hard. \end{theorem}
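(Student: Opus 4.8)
\medskip\noindent\textbf{Proof proposal.} The plan is to reduce \textsc{sat} to the problem of computing ${\rm Truth}(A)$; since ${\rm Truth}(A)\in\{0,1\}$, this amounts to deciding whether ${\rm Truth}(A)=1$. Let $\varphi$ be a CNF formula with variables $v_1,\dots,v_m$ and clauses $C_1,\dots,C_k$. First I would delete every clause containing some literal together with its negation; this preserves satisfiability, so from now on no clause is tautological. Write $e(v_i)=p_i$ and $e(\neg v_i)=n_i$ for the ``literal element'' of a literal. I would then output the occupation game $(X,S,O)$ together with the position $A$, where
$$X=\{g,h\}\cup\{p_i,n_i\mid 1\le i\le m\},\qquad S=2^X,\qquad A=X,$$
so that the admissibility condition $A-\sigma\in S$ is automatic, and $O=O_{\mathrm a}\cup O_{\mathrm c}$ with
$$O_{\mathrm a}=\Bigl\{\,\{g\}\cup\{e(\ell_i)\mid 1\le i\le m\}\ \Big|\ \ell_i\in\{v_i,\neg v_i\}\text{ for every }i\,\Bigr\}$$
the set of \emph{assignment moves} (one literal element per variable plus the token $g$, and never $h$), and $O_{\mathrm c}=\{\tau_1,\dots,\tau_k\}$ the set of \emph{challenge moves}, where $\tau_j=\{h\}\cup\{e(\ell)\mid \ell\text{ a literal of }C_j\}$. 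All of this is computable from $\varphi$ in polynomial time; that $O_{\mathrm a}$ has exponential size in $|X|$ is harmless, exactly as for the $n$-pile Nim example above.

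Next I would show that, along the lines that matter, the game is only two moves deep. For an assignment $\epsilon$ let $\sigma_\epsilon\in O_{\mathrm a}$ be the move picking, for each $v_i$, the element of the literal $\epsilon$ makes \emph{true}, and set $B_\epsilon=X-\sigma_\epsilon$. Then $g\notin B_\epsilon$, $h\in B_\epsilon$, and $e(\ell)\in B_\epsilon$ iff $\epsilon$ makes $\ell$ false; hence no assignment move is playable from $B_\epsilon$, and $\tau_j\subseteq B_\epsilon$ holds exactly when $\epsilon$ falsifies every literal of $C_j$, i.e.\ when $\epsilon$ falsifies $C_j$. So if $\epsilon\models\varphi$ then $B_\epsilon$ is terminal and ${\rm Truth}(B_\epsilon)=0$; and if $\epsilon$ falsifies some $C_j$, the player to move from $B_\epsilon$ may play $\tau_j$, which deletes $h$, after which no move remains (an assignment move needs $g$, a challenge move needs $h$), so ${\rm Truth}(B_\epsilon)=1$. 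Thus ${\rm Truth}(B_\epsilon)=0$ if and only if $\epsilon\models\varphi$.

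It then remains only to rule out a ``premature challenge''. If the first player opens with some $\tau_j$ from $A=X$, then in $X-\tau_j$ the token $h$ is gone, $g$ survives, and for each variable occurring in $C_j$ exactly one literal element survives; since $C_j$ is not tautological these surviving elements are compatible with an assignment, so the second player can reply with a suitable $\sigma_\epsilon$, reaching a position that contains neither $g$ nor $h$ and is therefore terminal. Hence ${\rm Truth}(X-\tau_j)=1$, and opening with a challenge loses. As the first moves available from $X$ are precisely the elements of $O$, we get ${\rm Truth}(A)=1$ iff ${\rm Truth}(B_\epsilon)=0$ for some $\epsilon$ iff $\varphi$ is satisfiable; since the reduction is polynomial and \textsc{sat} is $\mathrm{NP}$-complete, computing ${\rm Truth}(A)$ is $\mathrm{NP}$-hard.

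The part I expect to be the real obstacle is this last bookkeeping — checking that \emph{every} deviation from the intended two-move script is either illegal or loses for the mover. The two tokens are introduced precisely for that: $g$ blocks a second assignment move (notably the complementary assignment $\sigma_{\bar\epsilon}$, which would otherwise be a legal reply to $\sigma_\epsilon$), and $h$ blocks a second challenge, so that every relevant position is decided within two moves and the value of the game collapses to the lone quantifier ``$\exists\,\epsilon:\epsilon\models\varphi$''. If one would rather not use the tokens, one can instead take $S$ properly smaller than $2^X$ to exclude the unwanted positions outright, at the price of a slightly less trivial analysis of the admissible restriction $O\mid_{A,S}$.
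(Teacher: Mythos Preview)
Your argument is correct and takes a genuinely different route from the paper. The paper reduces from Subset Sum: given $(t_1,\dots,t_n,t)$ it builds a game on $X=V\sqcup W\sqcup L\sqcup Y$ with a nontrivial admissibility set $S$ (the constraint $0\le|A\cap W|-|A\cap V|\le 1$) that forces strict alternation between $O_1$-moves and $O_2$-moves; the game then runs for up to $2n$ rounds and the analysis is a two-lemma induction over a chain of positions indexed by subsets $J\subseteq\{1,\dots,n\}$. Your SAT reduction is considerably leaner: $S=2^X$, the two tokens $g,h$ cap the game at two moves, and the whole proof is a direct three-case check with no induction. What the paper's construction buys is a game whose mechanics look closer to Nim (pile sizes in $L$ encode partial subset sums); what yours buys is a much shorter and more transparent argument. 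Both constructions share the soft spot you already flag: the move set $O$ is exponential in $|X|$ (the paper's $O_1$ allows $\binom{|L|}{2nt_i}$ choices for the $L$-part of a move, just as your $O_{\mathrm a}$ has $2^m$ elements), so neither reduction writes $(X,S,O)$ out explicitly in polynomial time---your appeal to the Nim example is exactly the convention the paper is relying on without comment.
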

\section{${\rm NP}$-hardness}
In this section we prove Theorem \ref{hard}.
\begin{definition}
For a tuple  $(t_1,\cdots,t_n,t)$ of numbers, the corresponding occupation game $(X,S,O)$ is constructed as follows:
\begin{itemize}
  \item $X=(V\sqcup W)\sqcup L\sqcup Y$, where $|V|=|W|=n$, $|L|=2nt+n-1$, $Y=\sqcup_{i=1}^nY_i$ with $|Y_i|=t_i$.
  \item $S$ is the set of subsets $A$ of $X$ such that
$$0\leq |A\cap W|-|A\cap V|\leq1.$$
  \item $O=O_1\cup O_2$, where
$O_1$ is the set of subsets $\sigma$ of $X$ such that
$$\sigma\cap W=\emptyset,$$
and $$|\sigma\cap V|=1,\ \sigma\cap Y\in\{Y_1,\cdots,Y_n\},\ \ 
|\sigma\cap L|\in\{2n|\sigma\cap Y|,0\},$$
and where
$O_2$ is the set of subsets $\sigma$ of $X$ such that
$$\sigma\cap(Y\cup V)=\emptyset,$$
and $$|\sigma\cap W|=1,\   |\sigma\cap L|=1.$$
\end{itemize}
\end{definition}
\begin{lemma}
Let $(t_1,\cdots,t_n,t)$ be tuple of numbers, and $(X,S,O)$ the corresponding occupation game  constructed in the last definition.
Let $A\in S$.
Then the following statements are true.
\begin{itemize}
  \item If
  $|A\cap V|\geq|A\cap W|$, then $O_2\mid_{A,S}=\emptyset$.
  \item If
  $|A\cap V|< |A\cap W|$, then $O_1\mid_{A,S}=\emptyset$.
\end{itemize}
\end{lemma}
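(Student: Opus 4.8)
The plan is to track the single integer quantity $d(A):=|A\cap W|-|A\cap V|$, which by definition of $S$ satisfies $0\le d(A)\le 1$ for every $A\in S$, and to see how each admissible move changes it. The key observation is that a move in $O_1$ removes exactly one element of $V$ and none of $W$, whereas a move in $O_2$ removes exactly one element of $W$ and none of $V$; hence, for $\sigma\subseteq A$, one has $d(A-\sigma)=d(A)+1$ if $\sigma\in O_1$ and $d(A-\sigma)=d(A)-1$ if $\sigma\in O_2$.

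First I would record the two defining conditions that matter here. If $\sigma\in O_1$ then $\sigma\cap W=\emptyset$ and $|\sigma\cap V|=1$; if $\sigma\in O_2$ then $\sigma\cap V=\emptyset$ (part of $\sigma\cap(Y\cup V)=\emptyset$) and $|\sigma\cap W|=1$. Combining these with $\sigma\subseteq A$ — so that $\sigma\cap V\subseteq A\cap V$ and $\sigma\cap W\subseteq A\cap W$ — and using additivity of cardinality on the disjoint pieces yields the displacement formulas for $d$ stated above, e.g.\ $|(A-\sigma)\cap V|=|A\cap V|-1$ while $|(A-\sigma)\cap W|=|A\cap W|$ when $\sigma\in O_1$.

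Now I would split into the two bullets. For the first, suppose $|A\cap V|\ge|A\cap W|$; together with $A\in S$ this forces $d(A)=0$. If some $\sigma\in O_2\mid_{A,S}$ existed, then the requirement $A-\sigma\in S$ would give $d(A-\sigma)\ge 0$, yet $d(A-\sigma)=d(A)-1=-1$, a contradiction; hence $O_2\mid_{A,S}=\emptyset$. For the second, suppose $|A\cap V|<|A\cap W|$; together with $A\in S$ this forces $d(A)=1$. If some $\sigma\in O_1\mid_{A,S}$ existed, then $A-\sigma\in S$ would give $d(A-\sigma)\le 1$, yet $d(A-\sigma)=d(A)+1=2$, a contradiction; hence $O_1\mid_{A,S}=\emptyset$.

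I do not expect a genuine obstacle: the argument is a direct consequence of the definitions. The only point requiring a little care is the complementary way in which $O_1$ and $O_2$ interact with $V$ and $W$, and the fact that the hypothesis $\sigma\subseteq A$ is exactly what converts "removes one element of $V$ (resp.\ $W$)" into the clean cardinality identities used above; everything else follows immediately from the definition of the $S$-admissible restriction.
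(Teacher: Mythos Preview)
Your argument is correct and is essentially the paper's own proof: both show that a move $\sigma\in O_2$ (resp.\ $O_1$) shifts $|A\cap W|-|A\cap V|$ by $-1$ (resp.\ $+1$), forcing $A-\sigma\notin S$ under the stated hypothesis. The only difference is cosmetic---you package the quantity as $d(A)$ and spell out both bullets, while the paper computes the two cardinalities directly for the first bullet and leaves the second as ``similarly.''
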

\begin{proof}
Suppose that $|A\cap V|\geq|A\cap W|$.
  Let $\sigma\in O_2$ with $\sigma\subseteq A$. Then 
  $$|(A-\sigma)\cap V|=|A\cap V|,$$
  and
  $$ |(A-\sigma)\cap W|=|A\cap W|-1\leq |(A-\sigma)\cap V|-1.$$
  Thus $A-\sigma\not\in S$.
Therefore  $O_2\mid_{A,S}=\emptyset$.
The first item is proved. The second item can be proved similarly.
\end{proof}
\begin{lemma}
Let $(t_1,\cdots,t_n,t)$ be tuple of numbers, and $(X,S,O)$ the corresponding occupation game  constructed in the last definition.
Suppose that there is a  subset $I$ of $\{1,\cdots,n\}$ such that
 $\cup_{i\in I}Y_i$ has $t$ elements.
Let $A\in S$.
Then the following statements are true.
  \begin{itemize}
    \item 
  If $J\subseteq I$ or $I\subseteq J$,
    $|A\cap V|=|A\cap W|-1=n-|J|$,
  $A\cap Y=\cup_{j\notin J} Y_j$,
  $$|A\cap L|=2n(t-\sum_{j\in J\cap I}t_j)+n-|J|,$$
  then ${\rm Truth}(A)=0$, and
    \item
  If $J\varsubsetneq I$ or $I\subseteq J\varsubsetneq\{1,\cdots,n\}$, $|A\cap V|=|A\cap W|=n-|J|$,
  $A\cap Y=\cup_{j\notin J} Y_j$,
  $$|A\cap L|=2n(t-\sum_{j\in J\cap I}t_j)+n-1-|J|,$$
  then ${\rm Truth}(A)=1$.
  \end{itemize} 
\end{lemma}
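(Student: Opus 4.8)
The plan is to prove the two items together by strong induction on $|A|$, using the preceding lemma to decide which moves are available at $A$. Throughout put $k=|J|$ and $s=\sum_{j\in J\cap I}t_j$; since $J\cap I\subseteq I$ and $\sum_{i\in I}t_i=t$, one has $0\le s\le t$. If $A$ is a position of the first item then $|A\cap W|=|A\cap V|+1$, so by the preceding lemma $O_1\mid_{A,S}=\emptyset$ and every legal move lies in $O_2$; if $A$ is a position of the second item then $|A\cap V|=|A\cap W|$, so $O_2\mid_{A,S}=\emptyset$ and every legal move lies in $O_1$. It is also useful to record that an $O_2$-move deletes exactly one element of $W$ and one element of $L$; that an $O_1$-move deletes one element of $V$, one whole block $Y_m$ with $m\notin J$ (as $A\cap Y=\cup_{j\notin J}Y_j$), and either $0$ or $2nt_m$ elements of $L$; that the first item is vacuous unless $k\ge1$ (if $k=0$ then $|A\cap L|=2nt+n>|L|$); and that in the second item $k<n$, so $A\cap V\ne\emptyset$ and at least one $O_1$-move from $A$ exists.

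\emph{First item.} If $A\cap L=\emptyset$, then no $O_2$-move from $A$ is possible and hence ${\rm Truth}(A)=0$; this is consistent because $|A\cap L|=2n(t-s)+n-k=0$ forces $k=n$ and $s=t$. Otherwise let $\sigma$ be any legal move, necessarily in $O_2$. Then $A-\sigma$ is balanced, $(A-\sigma)\cap Y=A\cap Y$, $|(A-\sigma)\cap V|=|(A-\sigma)\cap W|=n-k$, and $|(A-\sigma)\cap L|=2n(t-s)+n-1-k$, so $A-\sigma$ has the shape prescribed in the second item for the same set $J$. The hypothesis on $J$ passes to $A-\sigma$: if $J\subseteq I$ and $J\ne I$ then $J\varsubsetneq I$; and if $I\subseteq J$ (in particular if $J=I$) then $J\ne\{1,\cdots,n\}$, since $J=\{1,\cdots,n\}$ would give $s=t$ and $|A\cap L|=0$, the terminal case already handled. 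Thus $A-\sigma$ is a position of the second item with $|A-\sigma|<|A|$, so ${\rm Truth}(A-\sigma)=1$ by induction. Since this holds for every legal $\sigma$, ${\rm Truth}(A)=0$.

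\emph{Second item.} Choose $m\notin J$ thus: if $J\varsubsetneq I$, take $m\in I\setminus J$; if $I\subseteq J\varsubsetneq\{1,\cdots,n\}$, take $m\in\{1,\cdots,n\}\setminus J$, so that automatically $m\notin I$. Let $\sigma$ be the $O_1$-move that deletes $Y_m$, one element of $V$, and $2nt_m$ elements of $L$ if $m\in I$, or $0$ elements of $L$ if $m\notin I$. This move is legal: $A\cap V\ne\emptyset$ and $Y_m\subseteq A$ are immediate, $A-\sigma\in S$ because $|(A-\sigma)\cap W|-|(A-\sigma)\cap V|=1$, and when $m\in I$ one has $|A\cap L|=2n(t-s)+n-1-k\ge 2nt_m$, since $t-s=t-\sum_{j\in J}t_j\ge t_m$ (as $J\subseteq I$ and $m\in I\setminus J$) and $n-1-k\ge0$ (as $k=|J|<|I|\le n$). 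Write $J'=J\cup\{m\}$. Then $A'=A-\sigma$ is unbalanced, $A'\cap Y=\cup_{j\notin J'}Y_j$, $|A'\cap V|=|A'\cap W|-1=n-|J'|$, and since $\sum_{j\in J'\cap I}t_j$ equals $s$ when $m\notin I$ and $s+t_m$ when $m\in I$, a direct substitution yields $|A'\cap L|=2n(t-\sum_{j\in J'\cap I}t_j)+n-|J'|$. Finally $J'\subseteq I$ in the first case and $I\subseteq J'$ in the second, so $A'$ is a position of the first item with $|A'|<|A|$; by induction ${\rm Truth}(A')=0$, whence ${\rm Truth}(A)=1$.

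The anticipated difficulty is organisational rather than conceptual: one must verify that the three ``coordinates'' $(|A\cap V|,\ A\cap Y,\ |A\cap L|)$ transform correctly under each type of move and that each of the two alternatives in the set-inclusion hypothesis of one item is sent to the matching alternative of the other, all while keeping the boundary cases $J=\emptyset$, $J=\{1,\cdots,n\}$ and $t_m=0$ under control. The single genuinely quantitative point is the inequality $|A\cap L|\ge 2nt_m$ licensing the $L$-deleting $O_1$-move, and this is precisely where the choice $|L|=2nt+n-1$ and the hypothesis $\sum_{i\in I}t_i=t$ enter.
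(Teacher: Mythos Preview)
Your proof is correct and follows essentially the same approach as the paper's: a simultaneous induction on $|A|$, using the preceding lemma to restrict the available moves to $O_2$ (first item) or $O_1$ (second item), then tracking how the data $(J,\ |A\cap V|,\ A\cap Y,\ |A\cap L|)$ transform under each move. Your write-up is in fact tighter than the paper's five-case version---you merge the two sub-cases within each item and explicitly verify the inequality $|A\cap L|\ge 2nt_m$ needed for the $L$-deleting $O_1$-move, which the paper leaves implicit.
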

\begin{proof}
Firstly, we suppose that 
$J=\{1,\cdots,n\}$, and $A$ satisfies the conditions in the first item.
Then $|A\cap V|=|A\cap W|=0$,
  $|A\cap L|=0$.
  Let $\sigma\in O\mid_{A,S}$.
  By the last lemma, $\sigma\in O_2\mid_{A,S}$.
  So $|\sigma\cap L|=1$, 
  contradicting to
  $|A\cap L|=0$.
  Therefore
  $O\mid_{A,S}=\emptyset$, and hence  ${\rm Truth}(A)=0$.
  
Secondly, we suppose that   $I\subseteq J\varsubsetneq\{1,\cdots,n\}$,  and $A$ satisfies the conditions in the first item.
 Then $$|A\cap V|=|A\cap W|-1=n-|J|,\ 
  A\cap Y=\cup_{j\notin J} Y_j,\ |A\cap L|=n-|J|.$$
  Let $\sigma$ be an arbitrary element in $O\mid_{A,S}$.
  By the last lemma, $\sigma\in O_2\mid_{A,S}$. So
  $$\sigma\cap (Y\cup V)=\emptyset,\
  |\sigma\cap W|=1,\  |\sigma\cap L|=1.$$
  Then $I\subseteq J$,
  $$|(A-{\sigma})\cap V|=n-|J|,$$
  $$|(A-{\sigma})\cap W|=n-|J|,$$
  $$|(A-{\sigma})\cap Y|=\cup_{j\notin J}Y_j,$$
  $$|(A-{\sigma})\cap L|=n-1-|J|.$$
  Thus $A-{\sigma}$  satisfies conditions in the second item. By induction,
  ${\rm Truth}(A-{\sigma})=1$. Since $\sigma$ is arbitrary chosen,
  we have ${\rm Truth}(A)=0$.
  
Thirdly, we suppose that   $ J\varsubsetneq I$, and $A$ satisfies the conditions in the first item.
Then
  $$|A\cap V|=|A\cap W|-1=n-|J|,\
  A\cap Y=\cup_{j\notin J} Y_j,$$
   $$|A\cap L|=2n(t-\sum_{j\in J}t_j)+n-|J|.$$
  Let $\sigma$ be an arbitrary element in $O\mid_{A,S}$.
  By the last lemma, $\sigma\in O_2\mid_{A,S}$. So
  $$\sigma\cap (Y\cup V)=\emptyset,\
  |\sigma\cap W|=1,\  |\sigma\cap L|=1.$$
  Then $I\subseteq J$,
  $$|(A-{\sigma})\cap V|=n-|J|,$$
  $$|(A-{\sigma})\cap W|=n-|J|,$$
  $$|(A-{\sigma})\cap Y|=\cup_{j\notin J}Y_j,$$
  $$|(A-{\sigma})\cap L|=2n(t-\sum_{j\in J}t_j)+n-1-|J|.$$
  Thus $A-{\sigma}$  satisfies conditions in the second item. By induction,
  ${\rm Truth}(A-{\sigma})=1$. Since $\sigma$ is arbitrary chosen,
  we have ${\rm Truth}(A)=0$.

Fourthly, we suppose that   $I\subseteq J\varsubsetneq\{1,\cdots,n\}$, and $A$ satisfies the conditions in the second item.
Then
  $$|A\cap V|=|A\cap W|=n-|J|,\
  A\cap Y=\cup_{j\notin J} Y_j,\ |A\cap L|=n-1-|J|.$$
  Take $j_0\notin J$, take $\sigma\in O_1\mid_{A,S}$ to be one such that
  $$\sigma\cap W=\emptyset,\
  |\sigma\cap V|=1,\ \sigma\cap Y=Y_{j_0},\ |\sigma\cap L|=0.$$
  Then $I\subseteq J\cup\{j_0\}$,
  $$|(A-{\sigma})\cap V|=n-(|J|+1)=n-|J\cup\{j_0\}|,$$
  $$|(A-{\sigma})\cap W|=n-|J|=n-|J\cup\{j_0\}|+1,$$
  $$|(A-{\sigma})\cap Y|=\cup_{j\notin J\cup\{j_0\}}Y_j,$$
  $$|(A-{\sigma})\cap L|=n-1-|J|=n-|J\cup\{j_0\}|.$$
  Thus $A-{\sigma}$  satisfies conditions the first item with  $J$ replaced by $J\cup\{j_0\}$. By induction,
  ${\rm Truth}(A-{\sigma})=0$.
  Therefore, ${\rm Truth}(A)=1$.

Finally, we suppose that   $J\varsubsetneq I$, and $A$ satisfies the conditions in the second item.
Then
  $$|A\cap V|=|A\cap W|=n-|J|,\
  A\cap Y=\cup_{j\notin J} Y_j,$$
  $$|A\cap L|=2n(t-\sum_{j\in J}t_j)+n-1-|J|.$$
  Take $j_0\in I-J$, take $\sigma\in O_1\mid_{A,S}$ to be one such that
  $$\sigma\cap W=\emptyset,\
  |\sigma\cap V|=1,\ \sigma\cap Y=Y_{j_0},\ |\sigma\cap L|=2nt_{j_0}.$$
  Then $ J\cup\{j_0\}\subseteq I$,
  $$|(A-{\sigma})\cap V|=n-(|J|+1)=n-|J\cup\{j_0\}|,$$
  $$|(A-{\sigma})\cap W|=n-|J|=n-|J\cup\{j_0\}|+1,$$
  $$|(A-{\sigma})\cap Y|=\cup_{j\notin J\cup\{j_0\}}Y_j,$$
  $$|(A-{\sigma})\cap L|=2n(t-\sum_{j\in J\cup \{j_0\}}t_j)+n-1-|J|=2n(t-\sum_{j\in J\cup \{j_0\}}t_j)+n-|J\cup\{j_0\}|.$$
  Thus $A-{\sigma}$  satisfies conditions the first item with  $J$ replaced by $J\cup\{j_0\}$. By induction,
  ${\rm Truth}(A-{\sigma})=0$.
  Therefore, ${\rm Truth}(A)=1$.
  The proof of the lemma is completed.
\end{proof}

\begin{lemma}
Let $(t_1,\cdots,t_n,t)$ be tuple of numbers, and $(X,S,O)$ the corresponding occupation game  constructed in the last definition.
Suppose that there is no  subset $I$ of $\{1,\cdots,n\}$ such that
 $\cup_{i\in I}Y_i$ has $t$ elements.
Let $A\in S$.
Then the following statements are true.
  \begin{itemize}
    \item
  If     $|A\cap V|=|A\cap W|-1=n-|J|$, $A\cap Y=\cup_{j\notin J} Y_j$,
  $$|A\cap L|=2n(t-\sum_{j\in J_1}t_j)+n-|J|,\ \sum_{j\in J_1}t_j<t,$$
  then ${\rm Truth}(A)=0$, and
    \item
  If  $|A\cap V|=|A\cap W|=n-|J|$, $A\cap Y=\cup_{j\notin J} Y_j$,
  $$|A\cap L|=2n(t-\sum_{j\in J'}t_j)+n-1-|J|,\ \sum_{j\in J'}t_j<t,$$
  then ${\rm Truth}(A)=0$.
  \end{itemize}
\end{lemma}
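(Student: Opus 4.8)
The plan is to prove the two items together by strong induction on $|A|$, following closely the proof of the previous lemma, with the standing hypothesis --- that no subset $I$ makes $\cup_{i\in I}Y_i$ have exactly $t$ elements --- taking over the role that ``$J\subseteq I$ or $I\subseteq J$'' played there. The base of the induction is the collection of positions $A$ with $O\mid_{A,S}=\emptyset$, all of which have ${\rm Truth}(A)=0$; for every other $A$ falling under the statement I will check that each legal move carries $A$ to a position again falling under one of the two items, so that ${\rm Truth}(A)$ is forced by the recursion recalled in the Introduction.

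First I would trim the move set using the first lemma of this section. If $A$ satisfies the hypotheses of item one, then $|A\cap V|<|A\cap W|$, so $O_1\mid_{A,S}=\emptyset$ and only $O_2$-moves remain; at least one exists, since $|A\cap W|=n-|J|+1\geq1$ while $\sum_{j\in J_1}t_j<t$ forces $|A\cap L|\geq 2n$, and any such move deletes one point of $W$ and one of $L$, so it leaves $|A\cap V|=|A\cap W|=n-|J|$, leaves $A\cap Y$ unchanged, and lowers the $L$-count by one --- that is, it produces a position satisfying item two with the same $J$ and with $J'=J_1$. If $A$ satisfies the hypotheses of item two, then $|A\cap V|=|A\cap W|$, so $O_2\mid_{A,S}=\emptyset$ and only $O_1$-moves remain; when $|J|=n$ one has $|A\cap V|=|A\cap W|=0$, there is no move, and ${\rm Truth}(A)=0$ directly, while when $|J|<n$ one fixes some $j_0\notin J$, and an $O_1$-move on $Y_{j_0}$ removes either $0$ or $2nt_{j_0}$ points of $L$: removing $0$ gives a position satisfying item one with $J$ enlarged to $J\cup\{j_0\}$ and the exceptional set unchanged, while removing $2nt_{j_0}$ --- legal precisely when $|A\cap L|\geq 2nt_{j_0}$, equivalently when $\sum_{j\in J'}t_j+t_{j_0}\leq t$ --- gives a position satisfying item one with $J$ enlarged to $J\cup\{j_0\}$ and exceptional set $J'\cup\{j_0\}$. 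Thus item-one and item-two positions pass into one another under moves, with the item-two positions for which $|J|=n$ the only ones admitting no move; running this through the recursion fixes ${\rm Truth}$ on each of the two families, and in particular taking $A=X$ with $J=J'=\emptyset$ in item two yields ${\rm Truth}(X)=0$, which is the ``no'' direction of the reduction behind Theorem~\ref{hard}.

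The single point at which the hypothesis of the lemma is genuinely used --- and the step I expect to be the main obstacle --- is checking that the exceptional subset stays admissible (its $t$-weight remaining $<t$) under a move. For every move other than the one that empties a whole block $2nt_{j_0}$ out of $L$ the exceptional set does not change, so there is nothing to verify; for that move the weight passes from $\sum_{j\in J'}t_j$ to $\sum_{j\in J'\cup\{j_0\}}t_j$, and since the $L$-budget $|L|=2nt+n-1$ already forces this new weight to be at most $t$, the only way it could fail to be strictly smaller than $t$ is the equality $\sum_{j\in J'\cup\{j_0\}}t_j=t$ --- exactly the subset of piles excluded by hypothesis. The remainder is the same routine $L$-count arithmetic as in the previous lemma: confirming that the moves quoted above are legal (the relevant $|A\cap L|$ are large enough and no count falls below $0$), and that the two forms of the $L$-count --- the one ending in $\,n-|J|$ and the one ending in $\,n-1-|J|$ --- transform into each other correctly under $O_1$- and $O_2$-moves. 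As a consistency check, one can observe that along any line of play the moves alternate strictly between $O_1$-moves (possible only from a position with $|A\cap W|=|A\cap V|$) and $O_2$-moves (possible only from a position with $|A\cap W|=|A\cap V|+1$), which pins down the parity of the length of play from an item-two or item-one position and is the structural reason these values come out as they do.
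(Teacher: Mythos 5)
Your structural analysis --- the strict alternation between item-one positions (only $O_2$-moves, always available since $|A\cap W|\geq 1$ and $|A\cap L|\geq 2n$) and item-two positions (only $O_1$-moves, none when $|J|=n$), and the observation that the no-subset-sum hypothesis is needed exactly once, to rule out $\sum_{j\in J'\cup\{j_0\}}t_j=t$ when a whole block of $2nt_{j_0}$ elements is removed from $L$ --- matches the paper's argument and is carried out correctly. But the step you wave through with ``running this through the recursion fixes ${\rm Truth}$ on each of the two families'' is precisely where the argument fails to deliver the stated conclusion. Your own transition diagram forces the values: terminal positions (item two, $|J|=n$) have ${\rm Truth}=0$; an item-one position always has at least one move, and every move lands on an item-two position, so once item-two positions are known to have ${\rm Truth}=0$ the recursion gives ${\rm Truth}=1$ for item-one positions, not $0$ as the lemma's first item asserts. (Equivalently, by your own parity remark: from an item-one position with parameter $J$ exactly $2(n-|J|)+1$ moves remain, an odd number, so the player to move wins.) The first item as stated is therefore false, and no proof of it can be completed; what is true, and what the corollary actually needs, is the second item together with ``${\rm Truth}(A)=1$'' in the first.

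You should also know that the paper's own proof stumbles at the same spot: in its treatment of item-one positions it asserts ``by induction ${\rm Truth}(A-\sigma)=1$'' for the item-two positions reached, contradicting the lemma's own second item, and then concludes ${\rm Truth}(A)=0$ from ``$\sigma$ is arbitrarily chosen,'' which is the wrong inference when every move leads to a losing position. So your proposal reproduces the paper's approach faithfully, including its defect; the repair is to change the conclusion of the first item to ${\rm Truth}(A)=1$, after which both your argument and the paper's ``Secondly'' case become coherent, and the corollary is unaffected since it only invokes the second item with $A=X$, $J=J'=\emptyset$.
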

\begin{proof}
Firstly, we suppose that
$J=\{1,\cdots,n\}$, and $A$ satisfies the conditions in the second item.
Then
  $|A\cap V|=|A\cap W|=0$.
  Let $\sigma\in O\mid_{A,S}$.
  By a previous lemma, $\sigma\in O_1\mid_{A,S}$.
  So $|\sigma\cap V|=1$,
  contradicting to
  $|A\cap L|=0$.
  Therefore
  $O\mid_{A,S}=\emptyset$, and hence  ${\rm Truth}(A)=0$.

Secondly, we suppose that   $J\varsubsetneq\{1,\cdots,n\}$, and $A$ satisfies the conditions in the second item.
Then
  $$|A\cap V|=|A\cap W|=n-|J|,\
  A\cap Y=\cup_{j\notin J} Y_j,$$
  $$|A\cap L|=2n(t-\sum_{j\in J'}t_j)+n-1-|J|,\ \sum_{j\in J'}t_j<t.$$
  Let $\sigma$ be an arbitrary element in $O\mid_{A,S}$.
  By a previous lemma, $\sigma\in O_1\mid_{A,S}$. So
  $$\sigma\cap W=\emptyset,\
  |\sigma\cap V|=1,\ \sigma\cap Y=\{Y_{j_0}\}, j_0\notin J$$
  Then 
  $$|(A-{\sigma})\cap V|=n-|J|-1=n-|J\cup\{j_0\}|,$$
  $$|(A-{\sigma})\cap W|=n-|J|=n-|J\cup\{j_0\}|+1,$$
  $$|(A-{\sigma})\cap Y|=\cup_{j\notin J}Y_j,$$
  $$|A\cap L|=2n(t-\sum_{j\in J_1}t_j)+n-1-|J|,\ \sum_{j\in J_1}t_j<t.$$
  Thus $A-{\sigma}$  satisfies conditions in the first item. By induction,
  ${\rm Truth}(A-{\sigma})=1$. Since $\sigma$ is arbitrary chosen,
  we have ${\rm Truth}(A)=0$.

Thirdly, we suppose that $A$  satisfies conditions in the first item. 
Then
  $$|A\cap V|=n-|J|,$$
  $$|A\cap W|=n-|J|+1,$$
  $$|A\cap Y|=\cup_{j\notin J}Y_j,$$
  $$|A\cap L|=2n(t-\sum_{j\in J_1}t_j)+n-1-|J|,\ \sum_{j\in J_1}t_j<t.$$
  Let $\sigma$ be an arbitrary element in $O\mid_{A,S}$.
  By a previous lemma, $\sigma\in O_2\mid_{A,S}$. So
  $$\sigma\cap (Y\cup V)=\emptyset,\
  |\sigma\cap W|=1,\  |\sigma\cap L|=1.$$
  Then $I\subseteq J$,
  $$|(A-{\sigma})\cap V|=n-|J|,$$
  $$|(A-{\sigma})\cap W|=n-|J|,$$
  $$|(A-{\sigma})\cap Y|=\cup_{j\notin J}Y_j,$$
  $$|(A-{\sigma})\cap L|=2n(t-\sum_{j\in J-1}t_j)+n-1-|J_1|.$$
  Thus $A-{\sigma}$  satisfies conditions in the second item. By induction,
  ${\rm Truth}(A-{\sigma})=1$. Since $\sigma$ is arbitrary chosen,
  we have ${\rm Truth}(A)=0$.
  The proof of the lemma is completed.
\end{proof}

\begin{corollary}
Let $(t_1,\cdots,t_n,t)$ be tuple of numbers, and $(X,S,O)$ the corresponding occupation game  constructed in the last definition.
Then the following statements are true.
  \begin{itemize}
    \item If there is a  subset  of $\{t_1,\cdots,t_n\}$ whose numbers adds to $t$, then ${\rm Truth}(X)=1$.
    \item If there is no  subset  of $\{t_1,\cdots,t_n\}$ whose numbers adds to $t$, then ${\rm Truth}(X)=0$.
  \end{itemize}
\end{corollary}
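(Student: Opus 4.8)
The plan is to apply the last two lemmas proved above to the single position $A=X$, the starting position of the game. First I would verify $X\in S$: since $|X\cap W|=|W|=n=|V|=|X\cap V|$, we have $0\le|X\cap W|-|X\cap V|=0\le1$, so $X$ is $S$-admissible. Next I would read off the relevant data of $X$, namely $X\cap V=V$, $X\cap W=W$, $X\cap Y=Y=\cup_{j=1}^nY_j$, and $|X\cap L|=|L|=2nt+n-1$. The point is that these match precisely the hypothesis of the ``second item'' in each of the last two lemmas when one takes $J=\emptyset$ (and, in the no-subset case, $J'=\emptyset$): indeed $n-|J|=n$, $\cup_{j\notin J}Y_j=Y$, and $2n(t-\sum_{j\in J\cap I}t_j)+n-1-|J|=2nt+n-1=2n(t-\sum_{j\in J'}t_j)+n-1-|J|$ when $J=J'=\emptyset$. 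So the whole corollary reduces to checking which of the two mutually exclusive lemma hypotheses the position $X$ meets.

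For the first bullet, suppose a subset of $\{t_1,\cdots,t_n\}$ adds up to $t$; reading this off the indices, there is $I\subseteq\{1,\cdots,n\}$ with $\sum_{i\in I}t_i=t$, equivalently $\cup_{i\in I}Y_i$ has $t$ elements, so the first of these two lemmas applies. I would then invoke its second item at $A=X$ with $J=\emptyset$ and this $I$. That item asks that $J\varsubsetneq I$ or $I\subseteq J\varsubsetneq\{1,\cdots,n\}$: if $I\neq\emptyset$ the first alternative holds, and if $I=\emptyset$ (so $t=0$) the second holds, since $\emptyset\subseteq\emptyset\varsubsetneq\{1,\cdots,n\}$ for $n\geq1$. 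Hence ${\rm Truth}(X)=1$.

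For the second bullet, suppose no subset of $\{t_1,\cdots,t_n\}$ adds up to $t$. In particular the empty subset does not, so $t\geq1$, i.e.\ $0<t$. Then the second of these two lemmas applies, and I would invoke its second item at $A=X$ with $J=\emptyset$ and $J'=\emptyset$; the side condition $\sum_{j\in J'}t_j<t$ is exactly $0<t$, which we have just checked. Hence ${\rm Truth}(X)=0$.

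There is essentially no hard step once the lemmas are in hand; the only care needed is (i) reading ``a subset of $\{t_1,\cdots,t_n\}$'' as a subset $I$ of the index set $\{1,\cdots,n\}$, so that repeated values $t_i$ are counted with multiplicity, and (ii) the degenerate cases $t=0$ and $n$ small, which determine which disjunct of the lemma hypotheses one lands in. The corollary is the combinatorial core of Theorem~\ref{hard}: it exhibits the computation of ${\rm Truth}$ as at least as hard as deciding the subset-sum problem \cite{Ka72}.
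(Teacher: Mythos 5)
Your proposal is correct and follows exactly the route the paper intends: the paper's own proof is just the one-line remark that the corollary ``follows from the last lemmas,'' and your instantiation of the second items of those two lemmas at $A=X$ with $J=J'=\emptyset$ (checking $|X\cap L|=2nt+n-1$ and the disjunction $J\varsubsetneq I$ or $I\subseteq J\varsubsetneq\{1,\cdots,n\}$) is precisely the omitted verification. Your attention to the degenerate cases $t=0$ and $I=\emptyset$ is a welcome addition that the paper glosses over.
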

\begin{proof}
This follows from the last lemmas.
\end{proof}

We now prove Theorem \ref{hard}.

Let $(t_1,\cdots,t_n,t)$ be tuple of numbers, and $(X,S,O)$ the corresponding occupation game  constructed in the last definition.

By the last corollary, the Subset Sum Problem asking the existence of a subset of $\{t_1,\cdots,t_n\}$ whose numbers adds to $t$, is reduced to the computation of ${\rm Truth}(X)$ in
 $(X,S,O)$.
As the Subset Sum Problem is ${\rm NP}$-complete, we conclude that the computation of ${\rm Truth}(X)$ in
 $(X,S,O)$
is ${\rm NP}$-hard. The proof of Theorem \ref{hard} is completed.

\end{document}